\documentclass{article}



\newcommand{\idspace}{\{0,1\}^d}

\newcommand{\dht}{{\scshape dht}}
\newcommand{\dhts}{{\dht s}}
\newcommand{\id}{{\scshape id}}
\newcommand{\ids}{{\id s}}
\newcommand{\xor}{{\scshape xor}}
\newcommand{\ptwop}{{P2P}}





\newcommand{\Thead}{{T_1^\prime}}
\newcommand{\Ttail}{{T_1^{\prime\prime}}}
\newcommand{\Ts}{{T_1^*}}
\newcommand{\Tss}{{T_1^{**}}}
\newcommand{\Th}{{{T}_1^*}}
\newcommand{\Thh}{{{T}_1^{**}}}
\newcommand{\Thhh}{{{T}_1^{***}}}

\newcommand{\ow}{{\overline{w}}}
\newcommand{\oDelta}{{\overline{\Delta}}}
\newcommand{\oV}{{\overline{V}}}

\usepackage[utf8]{inputenc}
\usepackage{amsmath}
\usepackage{amsthm}
\usepackage{array}
\usepackage{tikz}
\usepackage{float}
\usepackage[numbers]{natbib}
\usepackage[margin=1.45in]{geometry}
\usepackage[final]{changes}
\normalem




\newcommand\cG{{\cal G}}

\newcommand\cL{{\cal L}}

\newcommand\cS{{\cal S}}

\newcommand{\E}[1]{{\textbf E}\left[#1\right]}
\newcommand{\e}{{\textbf E}}

\newcommand{\va}{{\textbf{Var}}}

\newcommand{\p}[1]{{\textbf P}\left\{#1\right\}}

\newcommand\inprobHIGH{\,{\buildrel p \over \rightarrow}\,} 
\newcommand\inprob{{\inprobHIGH}}

\newcommand{\eql}{\,{\buildrel \cL \over =}\,}

\newcommand{\eqd}{\,{\buildrel {\rm def} \over =}\,}





\newcommand{\ind}[1]{{\textbf{1}_{\left[#1\right]}}}



\newtheorem{lemma}{Lemma}
\newtheorem{theorem}{Theorem}

\title{The Analysis of Kademlia for random IDs}

\author{
    Xing Shi Cai \hspace{8 mm} Luc Devroye\thanks{Research of the authors was supported
    by NSERC.}\\
    \small School of Computer Science, McGill University of Montreal, Canada,\\
    \small \texttt{xingshi.cai@mail.mcgill.ca} \hspace{8 mm} \texttt{lucdevroye@gmail.com}
}


\begin{document}

\maketitle

\begin{abstract}
Kademlia~\citep{Maymounkov02} is the de facto standard searching algorithm for
\ptwop\ (peer-to-peer) networks on the Internet.  In our earlier
work~\citep{Cai2013}, we introduced two slightly different models for Kademlia
and studied how many steps it takes to search for a target node by using
Kademlia's searching algorithm. The first model, in which nodes of the network
are labeled with deterministic \ids, had been discussed in that paper.  The
second one, in which nodes are labeled with random \ids, which we call the
Random \id\ Model, was only briefly mentioned.  Refined results with detailed
proofs for this model are given in this paper. Our analysis shows that with
high probability it takes about $c \log n$ steps to locate any node, where $n$
is the total number of nodes in the network and $c$ is a constant that does not
depend on $n$.
\end{abstract}

\section{Introduction to Kademlia}

A \ptwop\ (peer-to-peer) network~\citep{Schollmeier2001} is a decentralized
computer network which allows participating computers (\emph{nodes}) to share
resources.  Some \ptwop\ networks have millions of live nodes.  To allow
searching for a particular node without introducing bottlenecks in the network,
a group of algorithms \added{called} \dht\ (Distributed Hash Table)~\citep{Balakrishnan03}
was invented in the early 2000s, including Plaxton's
algorithm~\citep{Plaxton1999accessing}, Pastry~\citep{Rowstron01}, {\scshape
can}~\citep{Ratnasamy2001}, Chord~\citep{Stoica2001},
Koorde~\citep{Kaashoek2003koorde}, Tapestry~\citep{Zhao04}, and
Kademlia~\citep{Maymounkov02}. Among them, Kademlia is most widely used in
today's Internet.

In Kademlia, each node is assigned an \id\ selected uniformly at random from
$\{0,1\}^d$ (\emph{\id\ space}), where $d$ is usually $128$~\citep{Steiner07} or
$160$~\citep{Crosby07}. The \emph{distance} between two nodes is calculated by
performing the bitwise exclusive or (\xor) operation over their \ids\ and taking the result
as a binary number. (In this work \emph{distance} and \emph{closeness} always
refer to the \xor\ distance between \ids.) 

Roughly speaking, a Kademlia node keeps a table of a few other nodes
(\emph{neighbors}) \replaced{whose distances are sufficiently diverse}{with diversified distances to
itself}.  So when a node
searches for an \id, it always has some neighbors close to its target.  By
inquiring these neighbors, and these neighbors' neighbors, and so on, the
node that is closest to the target \id\ in the network will be found
eventually. Other \dhts\ work in similar ways.  The differences mainly come
from how distance is defined and how neighbors are chosen. For a more detailed
survey of \dhts, see~\citep{Balakrishnan03}.

\section{The Random ID Model}

This section briefly reviews the Random \id\ Model for Kademlia defined
in~\citep{Cai2013}. Let $d \ge \log_2 n$ be the length of $n$ binary \ids\
$X_1,\ldots,X_n$ chosen uniformly at random from $\idspace$ without
replacement. Consider $n$ nodes indexed by $i \in \{1,\ldots,n\}$. Let $X_i$ be the
\id\ of node $i$. 

Given two \ids\ $x=(x_1,\ldots,x_d), y=(y_1,\ldots,y_d)$, their \xor\ 
distance is defined by 
$$
\delta(x,y) = \sum_{j=1}^d (x_j \oplus y_j) \times 2^{d-j}.
$$
where $\oplus$ is the \xor\ operator
\[
u \oplus v =
\begin{cases}
1  & \text{if $ u \ne v $,} \\
0 & \text{otherwise.}
\end{cases}
\]

Let $\ell(x,y)$ be the length of the common prefix of $x$ and $y$. The $n$
nodes can be partitioned into $d+1$ parts by their common prefix length with
$x$ via
$$
\cS(x,j) = \{i:1 \le i \le n,~\ell(x,X_i)=j\}, \qquad 0 \le j \le d.
$$
For each $1 \le i \le n$, $d$ tables (\emph{buckets}) of size \replaced{at most}{up to} $k$ are
kept, where $k$ is a fixed positive integer. Buckets are indexed by $j \in
\{0,\ldots,d-1\}$.  The bucket $j$ is filled with \replaced{$\min\{k, |\cS(X_i,j)|\}$}{up to \(k\)} indices drawn
uniformly at random from $\cS(X_i,j)$ without replacement. Note that the first
$j$ bits of $X_s$, if $s \in \cS(X_i,j)$, agree with the first $j$ bits of
$X_i$, but the $(j+1)$-th bit is different.

Searching for $y \in \idspace$ initiated at node $i$ proceeds as follows. Given
that $\ell(y,X_i) = j$, $y$ can only be in $\cS(X_i,j)$. Thus, all indices from
the bucket $j$ of $i$ are retrieved, say $i_1,\ldots,i_k$. From them, the one
having shortest distance to $y$ is selected as $i^*$. (In fact, any selection
algorithm would be sufficient for the results of this paper.) Note that
$$
\ell(y,X_{i^*}) = \max_{1 \le r \le k}\ell(y,X_{i_r}).
$$
\added{Thus the choice of \(i^*\) does not depend on the exact distances from \(X_{i_1},\ldots,
X_{i_k}\) to \(y\).}
Therefore, instead of the \xor\ distance, only the length of common prefix is
needed in the following analysis of searching.

The search halts if $y = X_{i}$ or if the bucket is empty. In the latter case,
$X_i$ is closest to $y$ among all nodes. Otherwise we continue from $i^*$.
Since $\ell(y,X_{i^*}) > \ell(y,X_i)$, the maximal number of steps before
halting is bounded by $d$.  Let $T_i$ be the number of steps before halting in
the search of $y$ when started from $i$ (\emph{searching time}).  Then $T_i =
T_{i^*}+1$.

Treating $X_1,\ldots,X_n$ as strings consisting of zeros and ones, they can be
represented by a tree data structure called
$\emph{trie}$~\citep{Szpankowski2011}.  The $\cS(x,j)$'s can be viewed as
subtrees.  Filling buckets is equivalent to choosing \replaced{at most}{up to} $k$ leaves from each
of these subtrees.  Fig.\,\ref{fig:partition} gives an example of an \id\ trie.

\begin{figure}
\centering {
    \scalebox{0.8} {
        \includegraphics{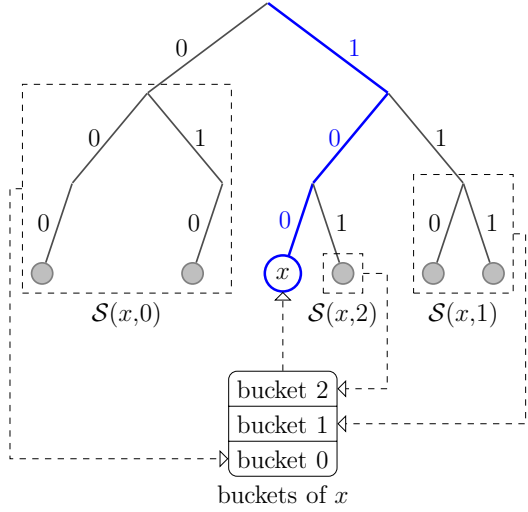}
    }
}
\caption[]{An example of Kademlia \id\ trie. Given an \id\ $x=(1,0,0)$, the
trie is partitioned into subtrees $\cS(x,0), \cS(x,1)$ and $\cS(x,2)$.  Node
$x$ maintains a bucket for each of these subtrees containing \replaced{at most}{up to} $k$ nodes
from the corresponding subtree.}
\label{fig:partition}
\end{figure}

\section{Main Results}

The structure of the model is such that nothing changes if $X_1,\dots,X_n, y$ are
replaced by their coordinate-wise \xor\ with a given vector $z \in \idspace$.
This is a mere rotation of the hypercube.
Thus, it can be assumed without loss of generality that $y =
(1,1,\ldots,1)$, the rightmost branch in the \id\ trie.

If $d \sim c \log_2 n$ for some $c \ge 1$, the searching time is \replaced{\(O(\log n)\)}{quite stable and
acceptable}, which is undoubtedly a contributing factor in Kademlia's success.
If $d = \omega(n)$, then it is not a useful upper bound of searching time
any more. However, in some
probabilistic sense, $T_i$ can be much smaller than
$\log_2 n$---it can be controlled by the parameter $k$, which measures the
amount of storage consumed by each node. The aim of this work is to investigate
finer properties of these random variables. In particular, the following
theorem is proved:
\begin{theorem}
\label{thm:random:target}
\added{Assume that \(d \ge \log_2 n\). Let \(k>0\) be a fixed integer.}
Let $\inprob$
denote convergence in probability. Then
\begin{align*}
& \frac {T_{1}} {\log_2 n} \inprob \frac 1 {\mu_k}, \qquad \text{as } n \to
\infty, \\
& \frac {\e T_{1}} {\log_2 n} \to \frac 1 {\mu_k}, \qquad \text{as } n \to
\infty,
\end{align*}
where $\mu_k$ is a function of $k$ only:
$$
\mu_k = \sum_{j=1}^{\infty} 1 - \left( 1 - \frac 1 {2^{j-1}} \right)^k.
$$
In particular, $\mu_1 = 2$.
\end{theorem}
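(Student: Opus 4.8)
The plan is to recast the search started at node $1$ as a strictly increasing random walk along the all-ones spine of the \id\ trie, to show that this walk advances by $\mu_k$ levels per step on average, and that it must climb a total distance $(1+o(1))\log_2 n$. For $0 \le j \le d$ let $N_j$ be the number of indices $s$ such that $X_s$ begins with $1^j$, so $N_0 = n \ge N_1 \ge \cdots \ge N_d \ge 0$, and set $H = \max\{j : N_j \ge 1\}$. Since $y=(1,\dots,1)$, $\ell(y,X_i)$ equals the number of leading ones of $X_i$, and if $\ell(y,X_i)=j$ then $\cS(X_i,j)$ is exactly the set of the $N_{j+1}$ indices whose \ids\ begin with $1^{j+1}$. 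Writing $J_0 = \ell(y,X_1)$ and $J_t$ for the common-prefix length of $y$ with the node reached after $t$ steps, each step draws a uniformly random $\min(k,N_{J_t+1})$-subset of those $N_{J_t+1}$ indices and moves to the one with the most leading ones; hence $(J_t)$ is strictly increasing, the bucket used is empty precisely when $J_t = H$ (so in particular $J_t \le H$ always), and $T_1 = \min\{t : J_t = H\}$. Because the buckets of distinct nodes are independent and the $t$-th visited node is a function of the trie and of the buckets used at steps $0,\dots,t-1$ only, conditionally on the trie $(J_t)$ is a Markov chain with
$$
\p{J_{t+1} \ge j + r \,\bigm|\, \cF_t} \;=\; 1 - \frac{\binom{N_{j+1}-N_{j+r}}{s}}{\binom{N_{j+1}}{s}}, \qquad s = \min(k,N_{j+1}),\quad r \ge 1.
$$

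I would then extract two facts about the random trie, both valid for every $d \ge \log_2 n$ since $X_1,\dots,X_n$ are exchangeable and marginally uniform on $\idspace$. First, $H$ is the longest all-ones prefix among $n$ marginally uniform strings, so $\p{H \ge j} \le n2^{-j}$ for every $j$ by a union bound, while $\p{H < j} \le (1-2^{-j})^n$ gives $\p{H \ge \log_2 n - \log_2\log_2 n} = 1-o(1)$; hence $H/\log_2 n \inprob 1$, $H$ has a geometric upper tail above $\log_2 n$, and $\e H = (1+o(1))\log_2 n$, $\e[H^2] = O(\log^2 n)$. Second, fix $\omega_n = \lceil 8\log_2\log_2 n\rceil$ and $a_n = \lfloor \log_2 n - \omega_n \rfloor$. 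Since $\e N_j = n2^{-j}$ and, conditionally on $N_j$, $N_{j+m}$ is essentially $\mathrm{Binomial}(N_j,2^{-m})$, Chernoff bounds with a union bound over $j \le a_n$ give that with probability $1-o(1)$ the ratios $N_{j+m}/N_j$ equal $2^{-m}(1+o(1))$ uniformly over $j \le a_n$ and $m \le \tfrac12\omega_n$ (the larger $m$ contributing negligibly to everything), and that $N_{a_n} \to \infty$. Substituting into the transition law and using $\binom{M-pM}{k}/\binom{M}{k} \to (1-p)^k$ as $M\to\infty$ for fixed $k$, one obtains that on this event the conditional law of $\Delta_t := J_{t+1}-J_t$, at every depth $J_t \le a_n$, is within $O\big((\log n)^{-1}\big)$ in total variation of the fixed law $\Delta$ on $\{1,2,\dots\}$ given by $\p{\Delta \ge r} = 1-(1-2^{-(r-1)})^k$, which has a geometric tail and mean $\sum_{r\ge1}\big(1-(1-2^{-(r-1)})^k\big) = \mu_k < \infty$; in particular $\mu_1 = \sum_{r\ge1}2^{-(r-1)} = 2$.

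Next I would run a renewal argument. Couple the successive increments of $(J_t)$, up to the first time $\tau$ that $J_t \ge a_n$, with an i.i.d.\ sequence of copies of $\Delta$; by the total-variation bound above, on the good-trie event the per-step coupling error is $O((\log n)^{-1})$, so over the $O(\log n)$ steps involved the two walks coincide with probability $1-o(1)$. For the i.i.d.\ walk the elementary renewal theorem gives first-passage time above level $a_n$ equal to $(1+o_p(1))\,a_n/\mu_k = (1+o_p(1))\,(\log_2 n)/\mu_k$ with $O_p(1)$ overshoot, so $\tau = (1+o_p(1))(\log_2 n)/\mu_k$. The remaining climb costs $T_1 - \tau \le H - a_n = \omega_n + O_p(1) = o_p(\log n)$ steps since each step raises the depth by at least $1$, and $J_0 = O_p(1)$; hence $T_1/\log_2 n \inprob 1/\mu_k$. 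Convergence in mean then follows from the deterministic bound $T_1 \le H$: for $\varepsilon>0$, on the event $\{|T_1/\log_2 n - 1/\mu_k|\le\varepsilon\}$, which has probability $1-o(1)$, I bound $T_1$ directly, and on its complement $B$ I use Cauchy--Schwarz, $\e[T_1\ind{B}] \le \e[H\ind{B}] \le \sqrt{\e[H^2]\,\p{B}} = \sqrt{O(\log^2 n)\cdot o(1)} = o(\log n)$; letting $\varepsilon\downarrow0$ gives $\e T_1/\log_2 n \to 1/\mu_k$.

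The main obstacle is the second trie fact feeding into the renewal step: establishing, \emph{uniformly} over the $\Theta(\log n)$ relevant depths $j$, that the hypergeometric increment law is close to the ideal law $\Delta$, with per-step error small enough (here $o(1/\log n)$) to survive a union bound over all the steps --- this is exactly where the sampling-without-replacement corrections and the fluctuations of $N_{j+m}/N_j$ around $2^{-m}$ must be controlled, and the choice $\omega_n \asymp \log\log n$ is the resulting trade-off between needing the subtrees to be large (so the increment law is nearly ideal) and needing the fringe region $[a_n,H]$ to be short. Everything in that fringe region (subtree sizes ranging from $\mathrm{polylog}(n)$ down to $O(1)$) and the evaluation $\mu_1=2$ are routine.
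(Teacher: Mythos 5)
Your proposal is correct and its skeleton coincides with the paper's: cut the search at a level $a_n = \log_2 n - \Theta(\log\log n)$ (the paper's $J$ with $n/2^{J+1}\le (\log_2 n)^4$) and show the fringe costs $o_p(\log n)$ steps; show the subtree sizes $N_j$ concentrate uniformly down to that level; identify the per-step progress as (approximately) the maximum of $k$ i.i.d.\ geometric$(1/2)$ variables with mean $\mu_k$; and finish with the renewal/LLN first-passage argument. Where you genuinely diverge is in how the reduction to an i.i.d.\ sum is executed. The paper performs two explicit couplings: it surgically modifies the trie into a ``perfect'' one with subtree sizes exactly $n/2^{(j+1)\wedge J}$ (introducing ``ghost'' indices and bounding the probability that any encountered bucket contains one), and separately couples without-replacement bucket sampling to with-replacement sampling via marked buckets; after truncation (its Lemma on $\ow_t$) the increments become exactly i.i.d. You instead keep the exact hypergeometric transition kernel of the conditional Markov chain $(J_t)$ and bound its total-variation distance to the ideal law $\Delta$, absorbing both the size fluctuations and the without-replacement correction $\binom{M-pM}{k}/\binom{M}{k}\to(1-p)^k$ into one $O(1/\log n)$-per-step estimate, then run a step-by-step maximal coupling. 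Your route avoids the trie surgery and handles general $n$ without the paper's separate power-of-two and general-$n$ sections, at the price of having to verify the uniform TV bound carefully (which you correctly flag as the crux); the paper's route is longer but each coupling error is a crude union bound. Your treatment of $\e T_1$ also differs slightly and is clean: the paper uses the deterministic bound $\Thead \le \log_2 n$ plus a direct expectation bound on the tail, while you use $T_1 \le H$ together with $\e[H^2]=O(\log^2 n)$ and Cauchy--Schwarz on the bad event. Both are valid; I see no gap in your argument.
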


\added{In the rest of the paper, we first show that once the search reaches a node that shares a
common prefix of length about \(\log n\) with \(y\), the search halts in \(o(\log n)\) steps.
Thus it suffices to prove Theorem \ref{thm:random:target} for the time that it takes for this
event to happen.  Then we show that the \id{} trie is well balanced with high probability. Thus
when \(n\) is a power of \(2\), we can couple the search in the original trie with a search in a
trie that is a complete binary tree. It proves the theorem for this special case. After that, we
give a sketch of how to deal with general \(n\). At the end we briefly summarize some
implications of the theorem.}

\section{The Tail of the Search Time}

To keep the notation simple, let $m = \log_2 n$ and \replaced{and note that \(m\) is not necessarily
integer-valued}{assume that $n$ is a power
of two. In Section~\ref{sec:general}, we sketch what to do when $n$ is not a
power of two}.  Also, for analytic
purposes, define
$$
J = \min\left\{j: \frac{n}{2^{j+1}} \le m^{4}\right\}.
$$
Since $n / 2^J > m^{4}$ and $n / 2^{J+1} \le m^{4}$, 
\begin{align}
    J & < \log_2 \frac{n}{m^4} = m - 4 \log_2 m \le m, 
    \label{eq:J:upper}\\
    J & \ge \log_2 \frac{n}{m^4}-1 = m - 4 \log_2 m - 1.
    \label{eq:J:lower}
\end{align}

The importance of $J$ follows from the fact that once the search reaches a node
$i$ with $\ell(X_i,y) \ge J$, it takes very few steps to finish. 
\added{Let $\Thead$ be}
the number of search steps that depart from a node in the set
$\cS(y,j)$ for some $j < J$, with the very first node in the search being $1$. 

\begin{lemma}
    Theorem~\ref{thm:random:target} follows if 
    $$
    \frac {\Thead}{\log_2 n} \inprob \frac 1 {\mu_k}, \qquad \text{as } n \to \infty.
    $$
    \label{lem:tail}
\end{lemma}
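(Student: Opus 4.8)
The plan is to separate from $T_1$ those search steps that occur at common prefix levels $\ge J$ and to show that this tail part is of smaller order than $m = \log_2 n$. Write $T_1 = \Thead + \Ttail$, where $\Ttail$ is the number of search steps departing from a node $i$ with $\ell(X_i,y)\ge J$. Because the common prefix length of the current node with $y$ strictly increases at each step, no two visited nodes lie at the same level, so each of the $T_1$ steps is attached to exactly one level and $\Thead,\Ttail$ genuinely partition $T_1$. I will establish two things: (i) $\Ttail/m \inprob 0$; and (ii) the family $\{T_1/m\}_{n\ge 1}$ is uniformly integrable. Granting these, the first conclusion of Theorem~\ref{thm:random:target} is immediate, since $T_1/m = \Thead/m + \Ttail/m$ and the right side converges in probability to $1/\mu_k + 0$ by (i) and the hypothesis of the lemma; the second conclusion follows because convergence in probability together with uniform integrability implies convergence in $L^1$, whence $\E{T_1}/m \to 1/\mu_k$.

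For (i) I would use the same strict monotonicity once more: at most one search step departs from each level, so if $H = \max_{1\le i\le n}\ell(X_i,y)$ denotes the longest common prefix between $y$ and any node \id, then every tail step departs from a node that is among the $n$ nodes and therefore sits at a level in $\{J,\dots,H\}$; as these levels are distinct this gives $\Ttail \le (H-J+1)_+$. By \eqref{eq:J:lower}, $J\ge m-4\log_2 m-1$, so $\Ttail\le (H-m)_+ + 4\log_2 m + 2$. Since each $X_i$ is marginally uniform on $\idspace$ we have $\p{\ell(X_i,y)\ge t} \le 2^{-t}$, and a union bound over the $n = 2^m$ nodes gives $\p{H\ge m+s}\le 2^{-s}$ for every $s\ge 0$. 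Hence $\p{\Ttail\ge 4\log_2 m+2+s}\le 2^{-s}$, so $\Ttail = O(\log\log n)$ with high probability and in particular $\Ttail/m\inprob 0$; the same bound also yields $\E{\Ttail} = O(\log\log n)$, should a direct estimate of the tail's expectation be preferred to the uniform-integrability route.

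For (ii), note that after $t$ search steps the current node has common prefix length at least $t$, so $T_1\le H$; then $\p{H\ge m+s}\le 2^{-s}$ gives $\E{(T_1/m)^2}\le \E{(H/m)^2}\le 1 + 2/m + 3/m^2$, which is bounded uniformly in $n$. Thus $\{T_1/m\}_{n\ge 1}$ is bounded in $L^2$ and hence uniformly integrable.

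The calculations here are routine; the point requiring the most care is the bookkeeping of the first paragraph---checking that $\Thead$ and $\Ttail$ together exhaust the $T_1$ steps and combine correctly with the hypothesis---together with disposing of a few vanishing-probability edge cases, such as node $1$ itself already lying in $\cS(y,j)$ for some $j\ge J$ (so that $\Thead=0$) or $y$ coinciding with one of the $X_i$. I do not foresee a genuine obstacle.
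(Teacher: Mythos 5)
Your proposal is correct and follows essentially the same route as the paper: the identical decomposition $T_1=\Thead+\Ttail$, a first-moment bound showing $\Ttail=O(\log_2\log_2 n)$ (the paper bounds $\Ttail\le\sum_{j\ge J}\ind{|\cS(y,j)|>0}$ and sums $\min\{n/2^{j+1},1\}$; you bound $\Ttail\le (H-J+1)_+$ via the maximal prefix length $H$ and a union bound, which is the same computation organized differently), and then a passage to expectations. The only real variation is the last step, where the paper simply uses bounded convergence via the deterministic bound $\Thead/\log_2 n\le 1$ together with $\e\Ttail/\log_2 n\to 0$, while you establish uniform integrability of $T_1/\log_2 n$ through an $L^2$ bound on $H$ --- slightly more machinery than needed, but valid.
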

\begin{proof}
    \added{Let \(\Ttail = T_1 - \Thead\).}
    $\Ttail$ counts steps of the search departing from a node 
    in  $\bigcup_{j = J}^{d-1} \cS(y,j)$.  Thus
    $$
    \Ttail \le \sum_{j \ge J}^{d-1} \ind{|\cS(y,j)| > 0}.
    $$
    Noting that
    \begin{align}
        \e |\cS(y,j)| = \frac{n}{2^{j+1}},
        \label{eq:tree:expectation}
    \end{align}
    by linearity of expectation, 
    \begin{align*}
        \e \Ttail
        & \le \sum_{j \ge J}^{d-1} \p{|\cS(y,j)| \ge 1} 
        \le \sum_{j \ge J}^{d-1} \min\{\e |\cS(y,j)|, 1\}\\
        & \le \sum_{j \ge J}^{d-1} \min\left\{\frac{n}{2^{j+1}}, 1\right\} 
        \qquad (\text{by \eqref{eq:tree:expectation}})\\
        & \le \sum_{j \ge J}^{d-1} \ind{2^{j+1}<n} + \sum_{j \ge J}^{d-1}
        \ind{2^{j+1} \ge n} \times \frac{n}{2^{j+1}}\\
        & \le 4 \log_2 \log_2 n + 2 \qquad (\text{by \eqref{eq:J:lower}}).
    \end{align*}
    Thus, for all $\epsilon > 0$ fixed, 
    $$
    \p{\Ttail \ge \epsilon \log_2 n} 
    \le \frac{\e \Ttail}{\epsilon \log_2 n} 
    = o(1),
    $$
    Therefore $\Ttail/\log_2 n \inprob 0$. For the expectation, note that
    $$
    \frac {\e{\Thead}} {\log_2 n}  \to \frac{1}{\mu_k}, \qquad \text{as
    } n \to \infty,
    $$
    by the lemma's assumption and the fact that $\Thead / \log_2 n \le 1 <
    \infty$.
\end{proof}

\section{Good Tries and Bad Tries}

Since the tail of search does not matter, define a new partition $S_j$ of all nodes by
merging subtrees $\cS(y,j)$ for $j \ge J$ as follows:
$$
\cS_j =
\begin{cases}
    \cS(y,j) & \text{if $0 \le j < J$,} \\
    \bigcup_{i=J}^{d} \cS(y,i) & \text{if $j = J$}.
\end{cases}
$$
Let $N_j = |\cS_j|$. It follows from~\eqref{eq:tree:expectation} that
\begin{align}
    \e N_j =
    \begin{cases}
        {n}/{2^{j+1}}  & \text{if $0 \le j < J$,} \\
        {n}/{2^{J}} & \text{if $j = J$},
    \end{cases}
    \label{eq:N:expectation}
\end{align}
or simply $\e N_j = n/2^{(j+1)\wedge J}$, where $a \wedge b \eqd \min\{a,b\}$.
\added{Note that $N_j$ is
    hypergeometric with parameters
    $$
    \left(n,\frac{2^{d}}{2^{(j+1) \wedge J}}, 2^{d}-\frac{2^{d}}{2^{(j+1)
    \wedge J}}\right),
    $$
    i.e., it corresponds to the selection of $n$ balls without replacement from
    an urn of $2^{d}$ balls of which $2^{d}/2^{(j+1)\wedge J}$ are
    white~\citep[chap.~6.3]{Johnson2005}.}

The analysis of $\Thead$ can be simplified if the $N_j$'s are all close to their
expectations. To be precise, let $\alpha = m^{-3/2}$ be the \emph{accuracy
parameter}. An \id\ trie is \emph{good}, if
$$
\left| N_j - \e N_j \right| \le \alpha \times \e N_j,
$$
for all $0 \le j \le J$. Otherwise it is called \emph{bad}.

\begin{figure}
\centering {
    \scalebox{1.1} {
        \includegraphics{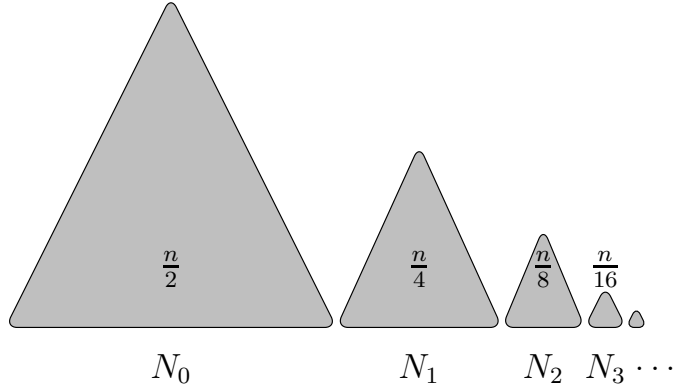}
    }
}
\caption[]{The approximate sizes of subtrees in a good trie.}
\label{fig:tree}
\end{figure}

\begin{lemma}
    \label{lem:pleasant}
    The probability that an \id\ trie is bad is $o(1)$.
\end{lemma}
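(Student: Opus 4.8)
The plan is to bound, for each fixed $j$, the probability that $N_j$ strays from $\e N_j$ by more than an $\alpha$-fraction, and then to take a union bound over the at most $J+1$ buckets. The feature that makes this work is that every bucket in the truncated partition has a large mean: since $n/2^{j+1} > m^4$ for $j < J$ and $j < J$ forces $j+1 \le J$, we get $\e N_j = n/2^{j+1} \ge n/2^{J} > m^4$ for such $j$, and for the merged bucket $\e N_J = n/2^{J} > m^4$ directly. Hence $\e N_j > m^4$ for every $0 \le j \le J$, whereas the accuracy parameter $\alpha = m^{-3/2}$ decays only polylogarithmically, so the requested relative accuracy is very mild compared to the bucket sizes.

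Next I would use the fact, already recorded above, that $N_j$ is hypergeometric: it counts the white balls when $n$ balls are drawn without replacement from an urn of $2^{d}$ balls containing $2^{d}/2^{(j+1)\wedge J}$ white ones. By Hoeffding's reduction --- sampling without replacement is at least as concentrated as the corresponding binomial --- the multiplicative Chernoff bound applies unchanged. Taking $n$ large enough that $\alpha < 1$, this yields
\[
\p{\,\bigl| N_j - \e N_j \bigr| > \alpha\, \e N_j \,}
\;\le\; 2\exp\!\left( -\frac{\alpha^{2}\,\e N_j}{3} \right)
\;\le\; 2\exp\!\left( -\frac{m^{-3}\cdot m^{4}}{3} \right)
\;=\; 2\,e^{-m/3}.
\]

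A union bound over $j \in \{0,1,\dots,J\}$ then gives
\[
\p{\text{the trie is bad}}
\;\le\; (J+1)\cdot 2\,e^{-m/3}
\;\le\; 2(m+1)\,e^{-m/3}
\;=\; o(1),
\]
using $J \le m$ from \eqref{eq:J:upper}. The only genuine point of care will be quoting a two-sided Chernoff--Hoeffding estimate valid for the hypergeometric law with the right constant and verifying its hypothesis that $\alpha$ is eventually below $1$; everything afterwards is immediate. It is worth noting that a second-moment argument alone would not suffice here: since $\va(N_j) \le \e N_j$ (the hypergeometric variance is dominated by the binomial one), Chebyshev's inequality gives only $\p{|N_j - \e N_j| > \alpha\, \e N_j} \le 1/(\alpha^2 \e N_j) \le 1/m$ per bucket, and summing over $\Theta(m)$ buckets leaves a $\Theta(1)$ bound --- so the exponential concentration (or a sufficiently high moment) is really needed.
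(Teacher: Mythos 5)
Your Chernoff-for-hypergeometric argument is valid and does prove the lemma: by Hoeffding's reduction the multiplicative Chernoff bound applies to $N_j$, each $\e N_j$ exceeds $m^4$ while $\alpha^2 = m^{-3}$, so each bucket fails with probability at most $2e^{-m/3}$ and the union bound over $J+1 \le m+1$ buckets finishes. This is a genuinely different (and quantitatively stronger) route than the paper's, which settles for a second-moment bound.

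However, your closing remark --- that ``a second-moment argument alone would not suffice'' --- is wrong, and the paper's own proof is precisely the Chebyshev argument you dismiss. The error is in how you sum the per-bucket bounds: you replace each term $1/(\alpha^2 \e N_j) = 2^{(j+1)\wedge J}/(\alpha^2 n)$ by the uniform worst case $1/(\alpha^2 m^4) = 1/m$ and then multiply by $\Theta(m)$ buckets. But these terms form a geometric series in $j$, dominated by the last one, so
$$
\sum_{j=0}^{J} \frac{\va(N_j)}{(\alpha \e N_j)^2}
\le \frac{1}{\alpha^2}\sum_{j=0}^{J} \frac{2^{j+1}}{n}
\le m^{3}\cdot\frac{2^{J+2}}{n}
\le \frac{4}{m} = o(1),
$$
using $n/2^{J} > m^{4}$. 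In other words, only the few buckets near level $J$ contribute probability close to $1/m$; the buckets at small $j$ have enormous means and contribute negligibly, so Chebyshev plus $\va(N_j) \le \e N_j$ is already enough. Your exponential bound buys a much smaller failure probability than the paper needs, at the cost of invoking a concentration inequality for sampling without replacement that must be quoted carefully; the paper's approach needs only the elementary variance comparison with the binomial. Either way the lemma holds, but you should delete or correct the claim that the exponential concentration is ``really needed.''
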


\begin{proof}
    It follows from the union bound that
    \begin{align*}
        \p{\bigcup_{j=0}^J \left[ |N_j - \e N_j| > \alpha \times \e N_j \right]}
        & \le \sum_{j=0}^J \p{ \left[ |N_j - \e N_j| > \alpha \times \e N_j \right]} \\
        & \le \sum_{j=0}^J \frac{\va(N_j)}{(\alpha \times \e N_j )^2} \qquad \text{(by
        Chebyshev's inequality}) \\
        & \le \sum_{j=0}^J \frac{\e N_j}{(\alpha \times \e N_j )^2} \qquad
        \text{($N_j$ is hypergeometric)}\\
        & \le \frac{1}{\alpha^2} \times \sum_{j=0}^J \frac{2^{j+1}}{n} \qquad
        \text{(by \eqref{eq:N:expectation})}\\
        & \le m^3 \times \frac{2^{J+2}}{n}
        = o(1). \qquad \left(\text{since } \frac{n}{2^{J}} > m^4\right)
    \end{align*}
    The fact used here is that $\va(N_j) \le \va(N_j')$ where
    $N_j'$ is binomial $(n,1/2^{(j+1) \wedge J})$.
    For the binomial, $\va(N_j') \le \e
    N_j' = \e N_j$.
\end{proof}

\section{Proof when $n$ Is a Power of $2$}

In this section, $n$ is assumed to be a power of $2$, i.e., $m$ is an integer.
The general case is treated in the next section.

\subsection{A Perfect Trie}

\label{sec:perfect}

Construct a coupled \id\ trie consisting of $Y_1,\ldots,Y_n$ as follows. If
$N_j \ge \e N_j$, i.e., the size of the subtree $\cS_j$ is at least its
expectation, let $Y_i = X_i$ for the $\e N_j$ smallest indices in $\cS_j$.
After this preliminary coupling, some $Y_i$'s are undefined. The indices $i$
for which $Y_i$ are undefined go into a global pool $\cG$ of size
$$
\sum_{j=0}^J \max\{N_j - \e N_j, 0 \}.
$$
For a good trie, the size of the pool is at most
$$
\sum_{j=0}^J \alpha \times \e N_j = \alpha \times \e \sum_{j=0}^J N_j = \alpha n.
$$

For a subtree $\cS_j$ of size $N_j < \e N_j$, take $\E{N_j} - N_j$ indices $i$ from
$\cG$ and assign $Y_i$ a value, that is different from all other $Y_s$'s, and
that has $\ell(Y_i,y) \wedge J = j$.  Subtrees of this new trie have fixed
sizes of
\begin{align}
    |\{i:\ell(Y_i,y) \wedge J=j\}|
    = \e N_j = \frac{n}{2^{(j+1) \wedge J}},
    \qquad 0 \le j \le J.
    \label{eq:Y}
\end{align}
A trie like this is called \emph{perfect}.  Indices $i$ for which $X_i
\ne Y_i$, i.e., $i \in \cG$, are called \emph{ghosts}. Other indices are
called \emph{normal}.

Next, refill the buckets according to the perfect trie, but keep buckets of normal
indices containing no ghosts unchanged.  Observe that a search step departing
at a normal index $i$ proceeds precisely the same in both tries if bucket $j$
(with $j=\ell(Y_i,y)$) of $i$ does not contain ghosts.  \added{Assuming that the original trie is good,
the
probability that a bucket \added{that corresponds to \(\cS_j\) for some \(j \le J\)} contains a ghost is not more than $k
\alpha$. This is because in the newly constructed prefect trie, the subtree \(\cS_j\)
contains no more than \(\alpha\) proportion of ghost nodes.}

Let $\Ts$ denote the number of search steps starting from node $1$ via node $i$
with $\ell(Y_i,y) < J$ in the perfect trie. Then
$
\left[ \Ts \ne \Thead \right] \subseteq B,
$
where $B$ is the event that at least one node in the buckets
encountered during a search is a ghost. Let $A$ be the event that the
trie is good. It follows from Lemma~\ref{lem:pleasant} that
$$
    \p{\Ts \ne \Thead} \le \p{B}
    \le \p{B, A} + \p{A^c}
    \le J \times k \alpha + o(1)
    = o(1).
$$
Therefore, Theorem~\ref{thm:random:target} follows if
$$
\frac {\Ts} {\log_2 n} \inprob \frac 1 {\mu_k}, \qquad \text{as } n \to
\infty.
$$

\subsection{Filling the Buckets with Replacement}

\label{sec:replacement}

To deal with the problem that buckets are filled by sampling without
replacement, another coupling argument is needed. Let $p_j$ be the probability
that the $k$ items sampled with\deleted{out} replacement from a set of size $n/2^{j+1}$ are
not all distinctive. Observe that by the union bound,
$$
p_j \le \binom {k}{2} \frac{2^{j+1}}{n} \le \frac{k^22^j}{n}.
$$
If $\ell(Y_i,y) = j < J$, then bucket $j$ of $i$ has $k$ elements drawn without
replacement from
$$
\cS = \{s:\ell(Y_s,y) \ge j+1\}, \quad 0 \le j < J.
$$
Observe that
$$
|\added{\cS}| = \frac{n}{2^{j+2}} + \frac{n}{2^{j+3}} + \dots + \frac{n}{2^{J}} +
\frac{n}{2^{J}} = \frac{n}{2^{j+1}}.
$$
Hence, with probability $1-p_j$, the sampling can be seen as having been
carried out with replacement. 

The coupling is as follows: for all $i$ with $\ell(Y_i,y)=j$ and all $0 \le
j < J$, mark bucket $j$ of $i$ with probability $p_j$. When a bucket is
marked, replace its entries with $k$ new entires drawn with replacement
conditioned on the existence of at least one duplicate entry. In this way,
all bucket entries are for a sampling with replacement. Let the search time,
starting still from $1$, \added{be denoted }by $\Tss$. Let $D$ be the event that during the
search \replaced{a marked}{an unmarked} bucket is encountered. Observe that
$
\left[ \Ts \ne \Tss \right] \subseteq D.
$
Therefore
$$
\p{\Ts \ne \Tss} 
\le \p{D} 
\le \sum_{j=0}^{J-1} p_j 
\le \sum_{j=0}^{J-1} 
\frac{k^22^{j}}{n}
< \frac{k^22^{J}}{n}
< \frac{2 k^2}{m^{4}} = o(1).
$$
So Theorem~\ref{thm:random:target} follows if
$$
\frac {\Tss} {\log_2 n} \inprob \frac 1 {\mu_k}, \qquad \text{as } n \to
\infty.
$$

\subsection{Analyzing $\Tss$ Using a Sum of I.I.D.\ Random Variables}

\label{sec:geo1}

Let $\Delta_0 = \ell(Y_1,y)$. Assume that step $t$ of the search departs from node
$i$ and reaches node $i^{*}$. Let $\Delta_t = \ell(Y_{i^{*}},y)-\ell(Y_i,y)$,
i.e., $\Delta_t$ represents the progress in this step. Then
$$
    \Tss 
    = \inf\left\{t:\sum_{s=0}^t \Delta_{\added{s}} \ge J\right\}.
$$
Due to the recursive structure of a \replaced{perfect}{prefect} trie,
$\Delta_1,\Delta_2,\ldots$, although not i.i.d.,
should have very similar distributions. This intuition leads to the following
analysis of $\Tss$ by studying a sum of i.i.d.\ random variables.

One observation allows us to deal with truncated version of $\Delta_t$'s is as
follows:
\begin{lemma}
    Let $w_0,w_1,\ldots$ be a sequence of real numbers with \(\sum_{t \ge 0} w_t = \infty\). Define
    $$
    \ow_t = w_t \wedge \left(M - \sum_{s=0}^{t-1} \ow_s \right), \qquad
    t=0,1,2\ldots,
    $$
    where $M$ is also a real number.  Then
    $$
    \inf\left\{t:\sum_{s=0}^{t} w_{\added{s}} \ge M\right\}
    = \inf\left\{t:\sum_{s=0}^{t} \ow_{\added{s}} \ge M\right\},
    $$
    \added{where we define the infimum of an empty set to be \(\infty\).}
    \label{lem:truncation}
\end{lemma}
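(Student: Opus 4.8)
The plan is to compress the whole statement into a single recursion for the partial sums of the truncated sequence. Write $W_t \eqd \sum_{s=0}^{t} w_s$ and $V_t \eqd \sum_{s=0}^{t} \ow_s$ for $t \ge 0$, use the empty-sum convention $W_{-1} = V_{-1} = 0$, and let $\tau$ and $\tau'$ denote the infima on the left- and right-hand sides, respectively. From the definition of $\ow_t$ and the elementary identity $a + (b\wedge c) = (a+b)\wedge(a+c)$ one gets, for every $t \ge 0$,
\[ V_t = V_{t-1} + \ow_t = V_{t-1} + \bigl(w_t \wedge (M - V_{t-1})\bigr) = (V_{t-1} + w_t)\wedge M . \]
Thus the truncated partial sums $V_t$ follow the untruncated ones $W_t$ step for step as long as a step would not carry the walk to or past $M$, and are pinned at $M$ the first time one would.

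The engine of the proof is the claim: \emph{for every $t \ge 0$, if $W_s < M$ for all $0 \le s \le t$, then $V_s = W_s$ for all $0 \le s \le t$.} This is a one-line induction: for $t = 0$ the recursion gives $V_0 = (0 + w_0)\wedge M = w_0\wedge M = w_0 = W_0$ because $W_0 = w_0 < M$; and for $t \ge 1$ the induction hypothesis gives $V_{t-1} = W_{t-1}$, so $V_t = (W_{t-1} + w_t)\wedge M = W_t\wedge M = W_t$, again using $W_t < M$.

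To finish, suppose first $\tau < \infty$. By minimality of $\tau$, $W_s < M$ for all $0 \le s < \tau$, so the claim gives $V_s = W_s < M$ for those $s$, i.e., $\tau' \ge \tau$; and the recursion gives $V_\tau = (V_{\tau-1}+w_\tau)\wedge M = (W_{\tau-1}+w_\tau)\wedge M = W_\tau\wedge M = M$ since $W_\tau \ge M$ (the case $\tau = 0$ being immediate), so $\tau' \le \tau$, hence $\tau' = \tau$. If instead $\tau = \infty$, the claim forces $V_s = W_s < M$ for all $s \ge 0$ and $\tau' = \infty = \tau$ as well; but the hypothesis $\sum_{t\ge0} w_t = \infty$ guarantees $W_t \ge M$ for some $t$, so this case does not arise and the common value of the two infima is finite.

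I do not expect a real obstacle here. Once the recursion $V_t = (V_{t-1}+w_t)\wedge M$ is written down, the rest is bookkeeping; the only things to watch are the empty-sum convention at $t=-1$, that $M$ and the $w_t$ may be negative (the induction never touches signs), and the degenerate case $\tau = \infty$. The summability hypothesis plays no part in the equality itself---it is there only to ensure that the shared value of the two infima is finite rather than $\infty$.
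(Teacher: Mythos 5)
Your proof is correct and follows essentially the same route as the paper: both arguments show by induction that the truncated and untruncated sequences agree until the first time the partial sums reach $M$ (the paper proves $\ow_t = w_t$ for $t < \tau$; you prove the equivalent statement for the partial sums via the recursion $V_t = (V_{t-1}+w_t)\wedge M$). Your write-up is slightly more explicit at the crossing time $\tau$ and about the role of the hypothesis $\sum_t w_t = \infty$, but there is no substantive difference in method.
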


\begin{proof}
    Let $\tau = \inf\left\{t:\sum_{s=0}^{t} w_{\added{s}} \ge M\right\}$.
    \added{If \(\tau = \infty\) or \(\tau = 0\), the lemma is trivially true. 
    So we assume \(0 < \tau < \infty\).}
    By induction\added{ on \(t\)},
    one can show that $\ow_t=w_t$ if $t < \tau$. 
    \added{Since \(0 < \tau\), we have \(\ow_0 = w_0\), which is the induction basis.
        If \(\ow_s = w_s\) for all \(0 \le s \le t-1\) and \(t < \tau\), then
        $$
        \ow_t 
        = w_t \wedge \left(M - \sum_{s=0}^{t-1} \ow_s \right) 
        = w_t \wedge \left(M - \sum_{s=0}^{t-1} w_s \right) 
        = w_t.
        $$}Therefore $\sum_{s=0}^{t} w_{\added{s}}
    < \added{M}$ if and only if $\sum_{s=0}^{t} \ow_{\added{s}} < \added{M}$.
\end{proof}

\noindent Let $\oDelta_t = \Delta_t \wedge\left( J - \sum_{s=0}^{t-1} \oDelta_{\added{s}} \right)$.
It follows from the previous lemma that
\begin{align*}
    \Tss 
    = \inf\left\{t:\sum_{s=0}^t \oDelta_{\added{s}} \ge J\right\},
\end{align*}
which is quite convenient as the distribution of $\oDelta_t$ is easy to compute.

Assume again that step $t$ of the search departs from node $i$ with
$\ell(Y_i,y) = j< J$. Consider one item, say $z$, in bucket $j$ of $i$.
Recall that $z$ is selected uniformly at random from all \replaced{indices \(r\)}{indexes} with
$\ell(\added{r} ,y) \ge j+1$.  Thus it follows from the structure of a perfect
trie, which is given by~\eqref{eq:Y}, that
\begin{align*}
    & \p{\ell(Y_{z},y) = s} 
    = \frac{\frac{n}{2^{s+1}}}{\frac{n} {2^{j+2}}+ \frac{n} {2^{j+3}} + \dots +
    \frac{n}{2^{J}} + \frac{n}{2^{J}}}
    = \frac{1}{2^{s-j}}, \qquad j+1 \le s < J, \\
    & \p{\ell(Y_{z},y) \ge J} 
    = \frac{\frac{n}{2^{J}}}{\frac{n} {2^{j+2}}+ \frac{n} {2^{j+3}} + \dots +
    \frac{n}{2^{J}} + \frac{n}{2^{J}}}
    = \frac{1}{2^{J-j-1}}.
\end{align*}
Or shifted by $-j$,
\begin{align*}
    & \p{\ell(Y_{z},y) - j = s} 
    = \frac{1}{2^{s}}, \qquad 1 \le s < J-j, \\
    & \p{\ell(Y_{z},y) - j \ge J-j} 
    = \frac{1}{2^{J-j-1}}.
\end{align*}
If truncated by $J-j$, we obtain
\begin{align*}
    \p{(\ell(Y_{z},y) - j) \wedge (J-j) = s}
    \, = \frac{1}{2^{s \wedge (J-j-1)}},
    \qquad 1 \le s \le J-j.
\end{align*}
Note that this is \emph{exactly} the distribution of a geometric $(1/2)$ truncated by $J-j$.

Recall that among all the values of $\ell(\,\cdot,\,y)$ given by items in the
bucket $j$ of $i$, the one chosen as the next stop of the search gives the maximum.
Thus
$$
\Delta_t
= \max_{z \in \text{bucket } j}\{\ell(Y_{z},y) - j\}.
$$ 
Let $Z_1,Z_2,\ldots$ be i.i.d.\ geometric $(1/2)$.  Let $V =
\max\{Z_1,\ldots,Z_k\}$.
Then
\begin{align*}
    \oDelta_t
    & = \Delta_t \wedge (J-j) \\
    & = \max_{z \in \text{bucket } j}\{(\ell(Y_z, y) - j)\} \wedge (J-j) \\
    & = \max_{z \in \text{bucket } j}\{(\ell(Y_z, y) - j) \wedge (J-j)\} \\
    & \eql \max\{Z_1 \wedge (J-j),\ldots,Z_k \wedge (J-j)\} \\
    & = \max\{Z_1,\ldots,Z_k\} \wedge (J-j) \\
    & = V \wedge (J-j).
\end{align*}

Let $V_0$ be a geometric $(1/2)$ minus one. \added{Then $V_0 \wedge d \eql \Delta_0$.}
Let $V_1,V_2,\ldots$ be i.i.d.\ random variables distributed as $V$.  Let
$\oV_t = V_t \wedge (J- \sum_{s=0}^{t-1} \oV_s)$.  Using induction and the
previous argument about $\oDelta_t$, one can show that
\begin{equation}
\sum_{s=0}^{t} \oV_{\added{s}} \eql \sum_{s=0}^{t} \oDelta_{\added{s}} \qquad t=0,1,\ldots.
    \label{eq:V:D}
\end{equation}
\added{For the induction basis, note that
    \[
        \oDelta_0 = \Delta_0 \wedge J \eql (V_0 \wedge d) \wedge J = V_0 \wedge J = \oV_0.
    \]
    Assume that \(\sum_{s=0}^{t-1} \oV_s \eql \sum_{s=0}^{t-1} \oDelta_s\) for some \(t > 0\).
    Then for all \(0 \le i \le J\),}
    \begin{align*}
        \p{\sum_{s=0}^{t} \oDelta_s = i}
        &
        =
        \sum_{j=0}^{i} 
        \p{\oDelta_t = i-j~\left|~\sum_{s=0}^{t-1} \oDelta_s =j \right.}
        \p{\sum_{s=0}^{t-1} \oDelta_s =j}
        \\
        &
        =
        \sum_{j=0}^{i} 
        \p{V_t \wedge (J-j) = i-j}
        \p{\sum_{s=0}^{t-1} \oV_s =j}
        \\
        &
        =
        \sum_{j=0}^{i} 
        \p{\left[ \oV_t=i-\sum_{s=0}^{t-1} \oV_s \right] 
        \cap \left[ \sum_{s=0}^{t-1} \oV_s = j \right]}
        = \p{\sum_{s=0}^{t} \oV_s = i}
        .
    \end{align*}
    Thus \eqref{eq:V:D} is proved.
    It then follows from Lemma~\ref{lem:truncation} and \eqref{eq:V:D} that
\begin{align*}
    \Tss 
    \eql \inf\left\{t:\sum_{s=0}^t \oV_{\added{s}} \ge J\right\}
    = \inf\left\{t:\sum_{s=0}^t V_{\added{s}} \ge J\right\},
\end{align*}
which makes $\Tss$ much easier to analyze.

Since $V < s$ if and only if $Z_{1},\ldots,Z_{k}$ are all smaller than $s$, 
$$
\p{V < s} = \prod_{r=1}^{k} \p{Z_r < s} = \left(1-\frac{1}{2^{s-1}} \right)^{k}.
$$
Therefore, by definition of $\mu_k$,
$$
\e V = \sum_{s=1}^{\infty} \p{V \ge s} = \sum_{s=1}^{\infty} 1 -
\left(1-\frac{1}{2^{s-1}} \right)^{k} = \mu_k.
$$
Readers familiar with renewal theory~\citep[chap.\ 4.4]{Durrett2010probability}
can immediately see that
$$
\frac{\Tss}{\log_2 n} 
= \frac{\Tss}{J} \times \frac {J}{\log_2 n} \inprob \frac{1}{\e V} =
\frac{1}{\mu_k},
$$
which completes the proof of Theorem~\ref{thm:random:target} for $n$ which is
power of $2$. The following Lemma gives some more details.
\begin{lemma}
    If $\tau = \inf\left\{t:\sum_{s=0}^t V_{\added{s}}  \ge M\right\}$,
    $$
    \frac{\tau}{M/\e V} \inprob 1, \qquad \text{as
    } M \to \infty.
    $$
    \label{lem:convergence}
\end{lemma}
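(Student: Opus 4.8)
The plan is to read Lemma~\ref{lem:convergence} as the elementary renewal statement that the first-passage level $M$, divided by the per-step mean $\e V$, approximates the number of steps, and to prove it via the strong law of large numbers together with a sandwich argument. I would begin by recording the relevant facts about the summands. Each $V_s$ with $s \ge 1$ is distributed as $V = \max\{Z_1,\dots,Z_k\}$ with $Z_r$ geometric $(1/2)$, so $V_s \ge 1$ surely; the term $V_0$ (a geometric $(1/2)$ minus one) is nonnegative with finite mean; and $\e V = \mu_k < \infty$, as already computed just above the lemma (the series converges because $1-(1-2^{-(s-1)})^k \le k\,2^{-(s-1)}$). Writing $S_t \eqd \sum_{s=0}^{t} V_s$, we then have $S_t \ge t$ surely, so $S_t \to \infty$, the stopping value $\tau = \tau(M)$ is finite for every real $M$, it is non-decreasing in $M$, and $\tau(M) \to \infty$ as $M \to \infty$.

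Next, for every $M$ large enough that $\tau(M) \ge 1$ (which happens once $M > V_0$, hence eventually on each sample path), the definition of $\tau$ gives the two-sided bound $S_{\tau-1} < M \le S_{\tau}$. Dividing by $M$ and inserting $\tau$ and $\tau+1$ appropriately yields
\[
\frac{S_{\tau-1}}{\tau}\cdot\frac{\tau}{M} \;<\; 1 \;\le\; \frac{S_{\tau}}{\tau+1}\cdot\frac{\tau+1}{M}.
\]
By the strong law of large numbers applied to $V_1, V_2, \dots$ (together with $V_0/t \to 0$), $S_t/(t+1) \to \e V$ almost surely as $t \to \infty$; since $\tau(M) \to \infty$, both $S_{\tau}/(\tau+1)$ and $S_{\tau-1}/\tau$ tend almost surely to $\e V$ as $M \to \infty$. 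Feeding this into the displayed inequalities and letting $M \to \infty$ traps $\tau/M$ between two sequences converging to $1/\e V$, whence $\tau/(M/\e V) \to 1$ almost surely, and a fortiori in probability, which is all the lemma requires.

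I do not anticipate a serious obstacle; the only points needing care are the index shift (the sum $\sum_{s=0}^{t}$ has $t+1$ terms, so the SLLN must be applied to $S_t/(t+1)$ rather than $S_t/t$), the harmless fact that $M$ ranges over the reals rather than the integers (monotonicity of $\tau(M)$ in $M$ disposes of this), and the single non-identically-distributed term $V_0$, which contributes $o(1)$ after division by $t$. If one prefers an argument that avoids the SLLN, the same conclusion follows by combining Wald's identity $\e S_{\tau} = \e V \cdot \e(\tau+1)$ with a Chebyshev bound, using that $V$ and $V_0$ have exponentially decaying tails and hence finite variance; but the almost-sure route above is shorter and already delivers more than the statement asks.
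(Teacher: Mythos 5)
Your proof is correct and follows essentially the same route as the paper: a sandwich argument combined with the strong law of large numbers, yielding almost-sure (hence in-probability) convergence of $\tau/M$ to $1/\e V$. The only cosmetic difference is the treatment of the odd term $V_0$: you absorb it into $S_t$ and note it is negligible after dividing by $t$, whereas the paper uses the stochastic ordering $V_0 \preceq V_1$ to trap $\tau$ between the stopping times $\tau'-1$ and $\tau'$ of the pure i.i.d.\ sum $\sum_{s=1}^t V_s$.
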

\begin{proof}
    Since $V_0+1$ is geometric $(1/2)$,
    $$
    \p{V_0 + 1 \le s} 
    = 1-\frac{1}{2^{s}} \ge \left( 1- \frac{1}{2^{s}} \right)^k = \p{V_1 \le s}.
    $$
    In other words, $V_0 \preceq V_1$, where $\preceq$ denotes
    stochastical ordering.  
    Let 
    \begin{align*}
        \tau' = \inf\left\{t:\sum_{s=1}^t V_s  \ge M\right\}, \qquad
        \tau'' = \inf\left\{t:\sum_{s=0}^t V_{s+1}  \ge M\right\} = \tau'-1.
    \end{align*}
    Then $\tau'' \preceq \tau$ and $\tau \le \tau'.$ By the strong law of large
    number\added{s}, both $\tau'/M$ and $\tau''/M$ converge\deleted{s} to $1/\e V$ almost surely.
    Therefore $\tau/M \inprob 1/\e V$.
\end{proof}

\section{Proof for the General Case}

\label{sec:general}

In this section, the proof Theorem~\ref{thm:random:target} for $n$ an
arbitrary integer is only sketched as most methods used here are very similar
to those in the previous section.

\subsection{An Almost Perfect Trie}

When $n$ is not power of $2$, $\e N_j = n/2^{(j+1) \wedge J}$ is not guaranteed
to be an integer. So a perfect trie is not well defined 
any more.  However, let us define
$$
b_j =
\begin{cases}
    \lceil \e N_j \rceil 
    = \lceil \frac{n}{2^{j+1}} \rceil  & \qquad 0 \le j < J, \\
    n - \sum_{s=0}^{J-1} b_s = n - \sum_{s=0}^{J-1} \lceil \frac{n}{2^{s+1}}
    \rceil & \qquad j = J.
\end{cases}
$$
Then the coupling argument for perfect tries used in Section~\ref{sec:perfect}
can still be applied, now replacing $\e N_j$ by $b_j$.

In this way, a trie consisting of $Y_1,\ldots,Y_n$ can be constructed, with
its subtrees having fixed sizes of
\begin{align}
    |\{i:\ell(Y_i,y) \wedge J=j\}|
    = b_j.
    \label{eq:Y:almost}
\end{align}
If the original trie is good, then the number of indices $i$ for which $X_i \ne
Y_i$, called \emph{ghosts}, is
bounded by
\begin{align*}
    \sum_{j=0}^{J-1} \alpha \times \e N_j  + (\alpha \e N_J + J) = \alpha n + J.
\end{align*}
A trie with these properties is called \emph{almost perfect}.

Let $\Th$ denote the number of search steps starting from node $1$ via node $i$
with $\ell(Y_i,y) < J$ in the almost perfect trie. If $\Th$ and $\Thead$ are
coupled  the same way as they were in Section~\ref{sec:perfect}, then
$
\left[ \Th \ne \Thead \right] \subseteq B,
$
where $B$ is the event that at least one node in the buckets
encountered during a search is a ghost. Let $A$ be the event that the
trie is good, which has probability $o(1)$ by Lemma~\ref{lem:pleasant}.
One can check that
\begin{align*}
    \p{\Th \ne \Thead}
    \le \p{B}
    \le \p{B, A} + \p{A^c}
    \le m k (m^{-3/2} + \frac{m}{2^{m}}) + o(1)
    = o(1).
\end{align*}
Again, Theorem~\ref{thm:random:target} follows if
$$
\frac {\Th} {\log_2 n} \inprob \frac 1 {\mu_k}, \qquad \text{as } n \to
\infty.
$$

\subsection{Filling the Buckets with Replacement}

The coupling argument used in Section~\ref{sec:replacement} to deal the problem
that buckets are filled by sampling without replacement can be adapted for an
almost perfect trie.  Let $p_j$ \deleted{is }be the probability that $k$ items sampled
without replacement from a set of size $b_{j+1} + \dots + b_{J}$ have
conflicts. Observe that, for $n$ large enough,
$$
b_{j+1} + \dots + b_{J} \ge \frac{n}{2^{j+1}} - (j+1) \ge
\frac{n}{2^{j+2}}.
$$
Thus it follows from the union bound that
$$
p_j 
\le \binom {k}{2} \frac{1}{b_{j+1}+\dots+b_{J}} 
\le \frac{k^{2}}{2 (b_{j+1} + \dots + b_{J})}
\le \frac{2^{j+1}}{n}.
$$

Let the search time of sampling without replacement be $\Thh$.  Let $\Thh$ and
$\Th$ be coupled in the same way as they were in Section~\ref{sec:replacement}.
Let $D$ be the event that during the search an unmarked bucket is encountered.
Since
$
\left[ \Th \ne \Thh \right] \subseteq D,
$
one can check that
\begin{align*}
    \p{\Th \ne \Thh}
    \le \p{D} \le \sum_{j=0}^{J-1} p_j
    < \frac{4 k^2}{m^{4}} = o(1).
\end{align*}
So once again, Theorem~\ref{thm:random:target} follows if
$$
\frac {\Thh} {\log_2 n} \inprob \frac 1 {\mu_k}, \qquad \text{as } n \to
\infty.
$$

\subsection{Analyzing $\Tss$ Using a Sum of I.I.D.\ Random Variables}

Consider two partitions of a line segment $L$ of length $n$.  From left to
right, cut $L$ into $J+1$ consecutive intervals $B_0,\ldots,B_J$, with
$|B_j|=b_j$, where $|a|$ denotes the length of $a$.  Again, from left to right,
cut $L$ into infinite many consecutive intervals $B'_0, B'_1, \ldots$, with
$|B_j'|=1/2^{j+1}$. 

Observe that for $0 \le j < J$, $B_j$ and $B'_j$ do not completely match since
$B_j$ is wider than $B_j'$.  However, since $|B_j| - |B_j'| \le 1$, for $0 \le
j < J$, the distance between the right endpoints of $B_j$ and $B_j'$ is at most
$J$. Therefore, the total length of unmatched regions, which are are
called \emph{death zones}, is $O(J^{2})$.

Let $\oDelta_0, \oDelta_1,\ldots$ and $V_0,V_1,\ldots$ be the same as in
Section~\ref{sec:geo1}.  A coupling between them can constructed as follows: pick
one point $z_0$ uniformly at random from the entire $L$. If $z_0$ falls in
interval $B_j$, let $\oDelta_0 = j$. If $z_0$ falls in interval $B'_j$, let
$V_0 = j$.  Note that $\oDelta_0 \eql \ell(Y_1,y)$.  Also note that since
$$
\p{V_0 = j} = \p{z_0 \in B'_j} = \frac{|B'_j|}{n} = \frac{1}{2^{j+1}}, \qquad
j=0,1,\ldots,
$$
$V_0$ is geometric $(1/2)$ minus one, as desired.

Assume that $\sum_{s=0}^{t-1} \oDelta_s = j$. Pick $k$ points from the line
segment starting from $B_{j+1}'$ to the right endpoint of $L$.  Let $V_t = s$
such that the rightmost one of the $k$ points falls into $B'_{j+s}$. Since
\begin{align*}
\p{V_t < s} 
& = \p{\text{all $k$ points are in $B'_{j+1}, \ldots
,B'_{j+s-1}$}} 
= \left( 1-\frac{1}{2^{s-1}} \right)^k,
\end{align*}
$V_t$ is again the maximum of $k$ i.i.d.\ geometric $(1/2)$.

If not all the $k$ points are in the range of $B_{j+1},\ldots,B_{J}$, keep
picking more points until $k$ of them are within this region.  Let
$\oDelta_t = s$ such that the rightmost of the these $k$ points falls into
$B_{j+s}$. Chosen in this way, $\oDelta_t$ has the same distribution as how
much progress one makes at step $t$ of the search.  Therefore 
$$ \Thh \eql
\inf\left\{t:\sum_{s=0}^t \oDelta_s \ge J\right\}. $$
It follows from Lemma~\ref{lem:convergence} that if
\begin{align*}
    \Thhh = \inf\left\{t:\sum_{s=0}^t V_s \ge J\right\},
\end{align*}
then ${\Thhh}/{\log_2 n} \inprob 1/{\mu_k}$ as $n \to \infty$.

Let $E$ be the event that at some step of the previous coupling, at least one
of the first $k$ chosen points falls into death zones. Note that $[\Thh \ne
    \Thhh] \subseteq E$. Therefore,
\begin{align*}
    \p{\Thh \ne \Thhh} 
    \le \p{E} 
    \le \sum_{j=0}^{J-1} k \frac{J^{2}}{b_{J}} \le
    \frac{m^{3}k}{m^{4}-m} = o(1).
\end{align*}
So the proof of Theorem~\ref{thm:random:target} \replaced{when \(n\) is}{for $n$ being} an arbitrary
integer is complete.

\section{Conclusions}

In a Kademlia system, \replaced{one often searches for a random \id{}}{which \id\ needs to be searched for is random}.  Although
$T_1$ is the searching time for a fixed \id, Theorem~\ref{thm:random:target}
still holds if the target $y$ is chosen uniformly at random from $\idspace$.

If $d \sim c \log_2 n$ with $c > 2$, there is no essential difference between
sampling the $n$ \ids\ with or without replacement from $\idspace$ as the
probability of a collision in sampling with replacement is $o(1)$. This is the
well known \emph{birthday problem}. Since in practice, a Kademlia system hands
out a new \id\ without checking its uniqueness, it is \deleted{is }wise to have $c > 2$
since then a randomly generated \id\ \replaced{clashes}{does not clash} with any existing \id\
\deleted{except }with \deleted{a }very small probability.

\added{Recall that $ \mu_k = \sum_{j=1}^{\infty} 1 - \left( 1 - 1/{2^{j-1}} \right)^k$. }
\added{Since the terms in the sum decrease in \(j\), \(\mu_k\)} can be bounded:
\begin{align*}
    \mu_k & \ge \int_{0}^\infty 1 - \left(1-\frac 1 {2^{x}}\right)^{k}
    \mathrm{d}x = \frac {H_k} {\log 2}, \\
    \mu_k & \le \int_{0}^\infty 1 - \left(1-\frac 1 {2^{x}}\right)^{k} \mathrm{d}x + 1 =
    \frac {H_k} {\log 2} + 1.
\end{align*}
Here $\log w$ denotes the natural logarithm of $w$, and $H_k = \sum_{s=1}^k
1/s$ is the $k$-th harmonic number. Since $H_k \sim \log k$,
$$
\lim_{k \to \infty} \frac {\mu_k} {\log_2 k} 
= \lim_{k \to \infty} \frac {H_k} {\log 2 \times \log_2 k} 
= 1.
$$
Thus, $T_1 /\log_k n \inprob \log_2 k / \mu_k = 1 + o_k(1)$. \added{Since \(T_1/(\frac 1 2 \log_2 n) \inprob
1\) when \(k=1\)}, an increase in
storage by a factor of $k$ \deleted{thus }results in a modest decrease in searching time by
a factor of $\log(k)/ \added{(2 \log 2)}$.

In~\citep{Cai2013}, it has been proved that if $X_1=x_1,\dots,X_n=x_n$ for fixed
$x_1,\ldots,x_n$, then
$$
\sup_{x_1,\ldots,x_n} \sup_{i} \sup_{y} \e{T_i} \le \left(\frac {\log 2}{H_k} +
o(1)\right)\log_2 n.
$$
Thus Theorem~\ref{thm:random:target} implies that the above upper bound is not
far from tight when $k$ is large. Table~\ref{table:mu:k} lists the numeric
values of $1/\mu_k$ and $\log(2)/H_k$ for $k = 1,\ldots,10$.

\begin{table}
    \label{table:mu:k}
    \setlength{\tabcolsep}{12pt}
    \caption{Numeric values of $1/\mu_k$ and $\log(2)/H_k$.}
    \centering {
        \begin{tabular}{ l l l }
        \hline\noalign{\smallskip}
        \multicolumn{1}{c}{$k$} & \multicolumn{1}{c}{$1/\mu_k$} &
        \multicolumn{1}{c}{$\log(2)/H_k$}         \\
        \noalign{\smallskip}
        \hline
        \noalign{\smallskip}
        1  & 0.5000000000 & 0.6931471806      \\
        2  & 0.3750000000 & 0.4620981204      \\
        3  & 0.3181818182 & 0.3780802804      \\
        4  & 0.2853260870 & 0.3327106467      \\
        5  & 0.2635627530 & 0.3035681083      \\
        6  & 0.2478426396 & 0.2829172166      \\
        7  & 0.2358018447 & 0.2673294911      \\
        8  & 0.2261891923 & 0.2550344423      \\
        9  & 0.2182781689 & 0.2450176596      \\
        10 & 0.2116151616 & 0.2366523364      \\
        \hline
        \end{tabular}
    }
\end{table}

If $k = \Theta(\log n)$, then $T_1 \sim \log n / {\log \log n}$ in probability as
$n \to \infty$.  The proof of Theorem~\ref{thm:random:target} is for fixed $k$
only, but one can verify that\added{ only minor changes are needed to make it work for} such modest
increase in $k$ as a function of $n$\deleted{ does not alter the results}.\added{ More specifically, to
make the coupling with searching in a perfect trie work, we only need to redefine \(J=\min\{j:1/2^{j+1} < m^7\}\) and
\(\alpha = m^{-3}\). And Lemma \ref{lem:convergence} needs to use a version of the weak law of
large numbers \cite[thm. 2.2.4]{Durrett2010probability} instead of the strong law of large numbers to
deal with the fact that \(\e V\) is not a constant anymore.}

\deleted{A more direct exercise shows that for $k = n^{\Theta(1)}$, $T_1 = \Theta(1)$ in
probability. }\added{If \(k = n^{\Theta(1)}\), we can show that
\(T_1 = \Theta(1)\) in probability.  Note that here only an upper bound of \(T_1\) is needed.
Assuming that the \id{} trie is good, it can be proved that in each search step the length of the common
prefix of the current node and the target node increases by at least \(c \log n\) with high
probability, where \(c\) is a constant depending on \(k\). Thus after at most \(O(1)\) steps, the
current node and the target node are both in a subtree of size at most \(k\). Then the search
terminates after one more step.}

\bibliography{citation}{}

\begin{thebibliography}{15}
\providecommand{\natexlab}[1]{#1}
\providecommand{\url}[1]{\texttt{#1}}
\expandafter\ifx\csname urlstyle\endcsname\relax
  \providecommand{\doi}[1]{doi: #1}\else
  \providecommand{\doi}{doi: \begingroup \urlstyle{rm}\Url}\fi

\bibitem[Balakrishnan et~al.(2003)Balakrishnan, Kaashoek, Karger, Morris, and
  Stoica]{Balakrishnan03}
H.~Balakrishnan, M.~F. Kaashoek, D.~Karger, R.~Morris, and I.~Stoica.
\newblock Looking up data in {P2P} systems.
\newblock \emph{Communications of the ACM}, 46\penalty0 (2):\penalty0 43--48,
  2003.

\bibitem[Cai and Devroye(2013)]{Cai2013}
X.~Cai and L.~Devroye.
\newblock A probabilistic analysis of {K}ademlia networks.
\newblock In \emph{Algorithms and Computation}, volume 8283 of \emph{LNCS},
  pages 711--721. Springer, Berlin/Heidelberg, Germany, 2013.

\bibitem[Crosby and Wallach(2007)]{Crosby07}
S.~A. Crosby and D.~S. Wallach.
\newblock An analysis of {BitTorrent's} two {Kademlia-based} {\sc dht}s. {Rice
  University, Houston, TX, USA}.
\newblock Available online, 2007.

\bibitem[Durrett(2010)]{Durrett2010probability}
R.~Durrett.
\newblock \emph{Probability: Theory and Examples}.
\newblock Cambridge Series in Statistical and Probabilistic Mathematics.
  Cambridge University Press, 2010.

\bibitem[Johnson et~al.(2005)Johnson, Kemp, and Kotz]{Johnson2005}
N.~Johnson, A.~Kemp, and S.~Kotz.
\newblock \emph{Univariate Discrete Distributions}.
\newblock Wiley, Hoboken, NJ, USA, 2005.

\bibitem[Kaashoek and Karger(2003)]{Kaashoek2003koorde}
M.~F. Kaashoek and D.~R. Karger.
\newblock Koorde: A simple degree-optimal distributed hash table.
\newblock In \emph{Peer-to-Peer Systems II}, pages 98--107. Springer,
  Berlin/Heidelberg, Germany, 2003.

\bibitem[Maymounkov and Mazières(2002)]{Maymounkov02}
P.~Maymounkov and D.~Mazières.
\newblock {Kademlia}: A peer-to-peer information system based on the {\sc xor}
  metric.
\newblock In \emph{Peer-to-Peer Systems}, volume 2429 of \emph{LNCS}, pages
  53--65. Springer, Berlin/Heidelberg, Germany, 2002.

\bibitem[Plaxton et~al.(1999)Plaxton, Rajaraman, and
  Richa]{Plaxton1999accessing}
C.~G. Plaxton, R.~Rajaraman, and A.~W. Richa.
\newblock Accessing nearby copies of replicated objects in a distributed
  environment.
\newblock \emph{Theory of Computing Systems}, 32\penalty0 (3):\penalty0
  241--280, 1999.

\bibitem[Ratnasamy et~al.(2001)Ratnasamy, Francis, Handley, Karp, and
  Shen\-ker]{Ratnasamy2001}
S.~Ratnasamy, P.~Francis, M.~Handley, R.~Karp, and S.~Shen\-ker.
\newblock A scalable content-addressable network.
\newblock \emph{SIGCOMM Computer Communication Review}, 31\penalty0
  (4):\penalty0 161--172, 2001.

\bibitem[Rowstron and Druschel(2001)]{Rowstron01}
A.~Rowstron and P.~Druschel.
\newblock Pastry: Scalable, decentralized object location, and routing for
  large-scale peer-to-peer systems.
\newblock In \emph{Middleware 2001}, volume 2218 of \emph{LNCS}, pages
  329--350. Springer, Berlin/Heidelberg, Germany, 2001.

\bibitem[Schollmeier(2001)]{Schollmeier2001}
R.~Schollmeier.
\newblock A definition of peer-to-peer networking for the classification of
  peer-to-peer architectures and applications.
\newblock In \emph{Proceedings of 1st International Conference on Peer-to-Peer
  Computing}, pages 101--102, 2001.

\bibitem[Steiner et~al.(2007)Steiner, En-Najjary, and Biersack]{Steiner07}
M.~Steiner, T.~En-Najjary, and E.~W. Biersack.
\newblock A global view of {Kad}.
\newblock In \emph{Proceedings of the 7th ACM SIGCOMM Conference on Internet
  Measurement}, IMC '07, pages 117--122, New York, NY, USA, 2007. ACM.

\bibitem[Stoica et~al.(2001)Stoica, Morris, Karger, Kaashoek, and
  Balakrishnan]{Stoica2001}
I.~Stoica, R.~Morris, D.~Karger, M.~F. Kaashoek, and H.~Balakrishnan.
\newblock Chord: A scalable peer-to-peer lookup service for internet
  applications.
\newblock \emph{SIGCOMM Computer Communication Review}, 31\penalty0
  (4):\penalty0 149--160, August 2001.
\newblock ISSN 0146-4833.

\bibitem[Szpankowski(2011)]{Szpankowski2011}
W.~Szpankowski.
\newblock \emph{Average Case Analysis of Algorithms on Sequences}.
\newblock Wiley, Hoboken, NJ, USA, 2011.

\bibitem[Zhao et~al.(2004)Zhao, Huang, Stribling, Rhea, Joseph, and
  Kubiatowicz]{Zhao04}
B.~Y. Zhao, L.~Huang, J.~Stribling, S.~C. Rhea, A.~D. Joseph, and J.~D.
  Kubiatowicz.
\newblock Tapestry: A resilient global-scale overlay for service deployment.
\newblock \emph{IEEE Journal on Selected Areas in Communications}, 22:\penalty0
  41--53, 2004.

\end{thebibliography}
\bibliographystyle{abbrplainnat}

\end{document}